\newtheorem{lemma}{Lemma}
\newtheorem{proposition}{Proposition}
\newtheorem*{proof}{Proof}
\newtheorem*{remark}{Remark}
\begin{document}

\title{Control-Oriented Power Allocation for Integrated Satellite-UAV Networks}

\author{Chengleyang Lei,~\IEEEmembership{Student Member,~IEEE}, Wei Feng,~\IEEEmembership{Senior Member,~IEEE}, Jue~Wang,~\IEEEmembership{Member,~IEEE,}
	Shi~Jin,~\IEEEmembership{Senior~Member,~IEEE,} and Ning Ge,~\IEEEmembership{Member,~IEEE}
\thanks{C. Lei, W. Feng, and N. Ge are with the Department of Electronic Engineering, Beijing National Research Center for Information Science and Technology, Tsinghua University, Beijing 100084, China (email: lcly21@mails.tsinghua.edu.cn, fengwei@tsinghua.edu.cn, gening@tsinghua.edu.cn).}
        \thanks{J. Wang is with the School of Information Science and Technology, Nantong University, Nantong 226019, China (email: wangjue@ntu.edu.cn).}
\thanks{S. Jin is with the National Mobile Communications Research Laboratory, Southeast University, Nanjing 210096, China (email: jinshi@seu.edu.cn).}
}
        % <-this % stops a space
%\thanks{}% <-this % stops a space

% The paper headers
%\markboth{IEEE WIRELESS COMMUNICATIONS LETTERS}
%{Control-Oriented Power Allocation for Integrated Satellite-UAV Networks}

%\IEEEpubid{0000--0000/00\$00.00~\copyright~2021 IEEE}

% Remember, if you use this you must call \IEEEpubidadjcol in the second
% column for its text to clear the IEEEpubid mark.

\maketitle

\begin{abstract}
This letter presents a sensing-communication-computing-control ($\textbf{SC}^3$) integrated satellite unmanned aerial vehicle (UAV) network, where the UAV is equipped with on-board sensors, mobile edge computing (MEC) servers, base stations and satellite communication module.  	
Like the \textit{nervous system}, this integrated network is capable of organizing multiple field robots in remote areas, so as to perform mission-critical tasks which are dangerous for human.	
Aiming at activating this \textit{nervous system} with multiple $\textbf{SC}^3$ loops, we present a control-oriented optimization problem. Different from traditional studies which mainly focused on communication metrics, we address the power allocation issue to minimize the sum linear quadratic regulator (LQR) control cost of all $\textbf{SC}^3$ loops. Specifically, we show the convexity of the formulated problem and reveal the relationship between optimal transmit power and intrinsic entropy rate of different $\textbf{SC}^3$ loops. For the assure-to-be-stable case, we derive a closed-form solution for ease of practical applications. After demonstrating the superiority of the control-oriented power allocation, we further highlight its difference with classic capacity-oriented water-filling method.         
\end{abstract}

\begin{IEEEkeywords}
Control parameter, linear quadratic regulator (LQR), power allocation, satellite-UAV network.
\end{IEEEkeywords}

\section{Introduction}
\IEEEPARstart{F}{ield} robots could perform mission-critical tasks that are dangerous for human. When being dispatched in remote areas without terrestrial cellular coverage, operation of the robots has to rely on non-terrestrial infrastructures, including satellites and unmanned aerial vehicles (UAVs)~\cite{satellite1,emergency3}. For such scenarios, a UAV platform needs to have integrated functionalities to support various requirements of robots, e.g., sensing, controlling, computing, and communication. For example, a UAV can be equipped with on-board sensors to collect scene information, with mobile edge computing (MEC) servers to analyze the situation and make quick decisions for robot control, with base stations to transmit control commands to robots, and with satellite communication module to support real-time communication to the remote cloud center~\cite{ref1}. This leads to a sensing-communication-computing-control ($\textbf{SC}^3$) integrated satellite-UAV network, in which efficient resource orchestration for all the related functionalities is important. 

In the $\textbf{SC}^3$ integrated networks, the MEC servers analyze the situation and compute future control actions according to data from sensors, then the UAV transmits control commands to the robots to guide their actions and transmits sensor data when necessary to the remote cloud center for advanced analysis. The whole process is performed in a closed-loop manner, which is referred as a $\textbf{SC}^3$ loop. Intuitively, a $\textbf{SC}^3$ loop can be regarded as a \textit{reflex arc}, with the network regarded as a 
\textit{nervous system}~\cite{nervous_system}, where multiple $\textbf{SC}^3$ loops would share and compete for resources. 

Existing studies on integrated satellite-UAV networks have mainly focused on communications. For example, Liu \textit{et al.} jointly optimized the channel allocation, power allocation, and hovering time of UAVs to maximize the data transmission efficiency~\cite{Liu2021}. Wang \textit{et al.} investigated a space-air-ground network and jointly optimized hovering altitude and power allocation, to maximize the network capacity~\cite{Wang2019}. However, in a $\textbf{SC}^3$ integrated satellite-UAV network, control and communication are closed coupled, and we will be more concerned with the control performance, i.e., the deviation of the control objective state from the desired state. Therefore, the control part should also be considered in the design of $\textbf{SC}^3$ integrated networks. 

Researchers in the control field have investigated the relationship between communication and control in $\textbf{SC}^3$ loops earlier. It was shown that a noisy linear control system can be stabilized (i.e., its state vector is bounded) only if the communication throughput in one control cycle exceeds the intrinsic entropy rate of the control system~\cite{control1}. Qiu \textit{et al.} further generalized this result to a multi-channel case~\cite{control2}. Recently, the lower bound of the minimum data rate to achieve a certain linear quadratic regulator (LQR) cost was presented, where LQR cost is a metric to measure the state deviation and energy consumption~\cite{LQR}. All these achievements have indicated that jointly optimizing control and communication is promising and significant in the $\textbf{SC}^3$ integrated satellite-UAV network. However, most of these works modeled communications as simple pipelines with simple parameters and left practical communication resource allocation undiscussed.

\IEEEpubidadjcol
Inspired by the efforts in the control field, some recent works have considered the control part as constraints in communication design. Chang \textit{et al.} maximized the spectral efficiency of a wireless control system subject to the control convergence rate constraint~\cite{Chang2019}.
Chen \textit{et al.} maximized the delay determinacy under the same constraint~\cite{Chen2021}. These studies have taken a great step towards $\textbf{SC}^3$ control-oriented optimization. However, they still focused on communication metrics. In mission-critical $\textbf{SC}^3$ integrated networks, the control performance may be more important and therefore should be treated as objective, rather than constraints.

In this work, we directly optimize the sum LQR control cost of all $\textbf{SC}^3$ loops, which is more essential to measure the control performance. Particularly, we establish a relation between the LQR cost and the communication data rate, and then formulate a control performance optimization problem, by optimizing the transmit power of multiple $\textbf{SC}^3$ loops. We show the convexity of the formulated problem, reveal the relationship between optimal transmit power and the intrinsic entropy rates of different $\textbf{SC}^3$ loops. For the assure-to-be-stable case, we also derive a closed-form solution for ease of practical applications. 
      
\section{System Model and Problem Formulation}
As shown in Fig. \ref{fig:system}, we consider a $\textbf{SC}^3$ integrated satellite-UAV network which serves multiple field robots for mission-critical tasks, such as handling nuclear materials. One satellite and one UAV (integrated with a sensor, a base station, a satellite communication module and a MEC server) jointly provide sensing, communication, and computing services for the robots. The network enables multiple $\textbf{SC}^3$ loops simultaneously. In each loop, the sensor senses the states of the object. The MEC analyzes the sensor data and correspondingly computes the control commands. Next, the UAV transmits the sensor data to satellite for advanced analysis when necessary, and meanwhile transmits the control commands to guide the robot to properly handle the control object. Our goal is to make this closed-loop control process fast and accurate.

\begin{figure} [t]
	\centering
	\includegraphics[width=0.95\linewidth]{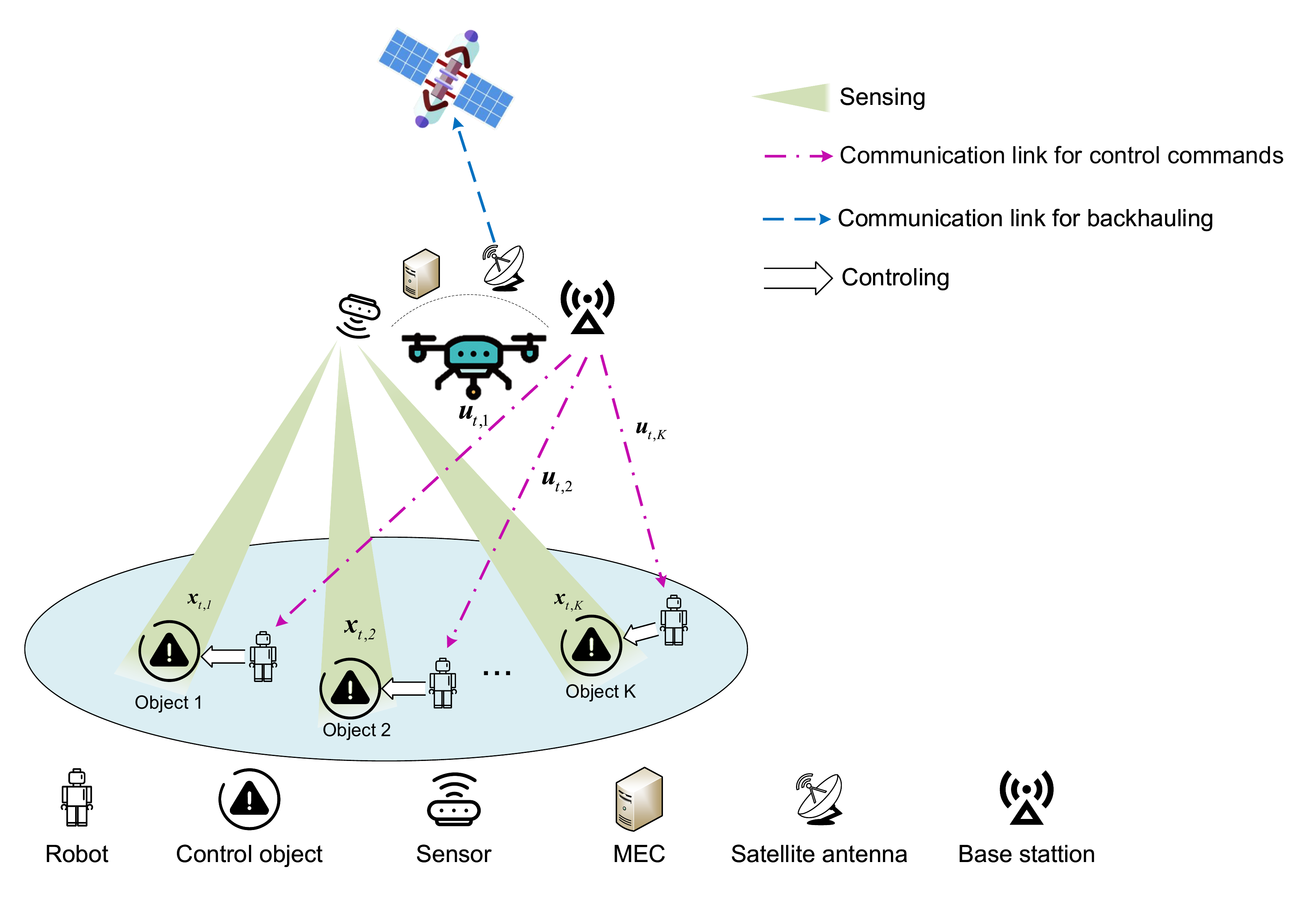}
	\caption{Illustration of a $\textbf{SC}^3$ integrated satellite-UAV network for organizing multiple field robots in remote areas.}
	\label{fig:system}
\end{figure}

The UAV simultaneously transmits the control commands to $K$ robots through orthogonal (e.g., in frequency) channels. Due to the limited transmit power, we have $\sum_{k = 1}^{K} p_k \leq P_{\text{max}}$, where $p_k$ denotes the power allocated to robot $k$ and $P_{\text{max}}$ represents the maximum transmit power of the UAV. The wireless channels between the UAV and robots are assumed to be dominated by line-of-sight (LoS) links~\cite{Zeng2016}. Therefore, the channel gain from the UAV to robot $k$ follows the free space path loss model as $g_k = \frac{\beta_0}{d^2_k}$, where $d_k$ denotes the distance from the UAV to robot $k$ and $\beta_0$ is the reference channel gain.

For the control part in $\textbf{SC}^3$ loops, we model each control object as a linear time-invariant system~\cite{control1, control2, LQR}, and hence the discrete-time system equation of the $k$th object is given by
\begin{equation}\label{system}
\mathbf{x}_{k,t+1} = \mathbf{A}_{k}\mathbf{x}_{k,t}+\mathbf{B}_{k}\mathbf{u}_{k,t}+\mathbf{v}_{k,t},
\end{equation}
where $t$ denotes the cycle index, $\mathbf{x}_{k,t}\in \mathbb{R}^{n}$ denotes the system state, such as the temperature or radiation intensity,  $\mathbf{u}_{k,t}\in \mathbb{R}^{m}$ denotes the control action, $\mathbf{v}_{k,t}\in \mathbb{R}^{n}$ denotes system noise, and $\mathbf{A}_{k}$ and $\mathbf{B}_{k}$ are fixed $n\times n$ and $n\times m$ matrices denoting the state matrix and input matrix respectively.

Due to the randomness of wireless channels between the UAV and robots, the communication data rate is limited, which may affect the control performance. According to \cite{control1}, to stabilize the control system $k$, the data throughput transmitted in each cycle needs to satisfy the condition
\begin{equation}\label{min_R}
BT_k\log_2\left( 1+\frac{g_kp_k}{\sigma^2}\right)  > h_k \triangleq \log_2 |\det \mathbf{A}_k|,
\end{equation}
where the left side of (\ref{min_R}) denotes the throughput in one cycle of channel $k$, $B$ is the bandwidth of each channel, $T_k$ is the time duration of each cycle in $\textbf{SC}^3$ loop $k$, $\sigma^2$ denotes the noise variance and $h_k$ is the intrinsic entropy rate which denotes the stability of object $k$. A large $h_k$ indicates an unstable control system, which requires high transmission rate to stabilize.

In control theory, the control performance can be measured by the LQR cost function. In this work, we consider the worst-case long-term average LQR cost, formulated as\cite{LQR}
\begin{equation}\label{LQR}
l_k \triangleq \sup\lim\limits_{N\rightarrow \infty}\mathbb{E} \left[ \sum_{t = 1}^{N} \left(\mathbf{x}_{k,t}^\text{T}\mathbf{Q}_{k}\mathbf{x}_{k,t} +\mathbf{u}_{k,t}^\text{T}\mathbf{R}_{k}\mathbf{u}_{k,t}\right) \right],
\end{equation}
where $\mathbf{Q}_k$ and $\mathbf{R}_k$ are semi-positive definite weight matrices. The term $\mathbf{x}_{k,t}^\text{T}\mathbf{Q}_{k}\mathbf{x}_{k,t}$ denotes the deviation of the system from zero state, and the term $\mathbf{u}_{k,t}^\text{T}\mathbf{R}_{k}\mathbf{u}_{k,t}$ denotes the control energy. These weight matrices balance the state and the energy, which can be set according to the practical requirements. For example, one should set the entries of $\mathbf{Q}_{k}$ to be large if he expects that the state of the system converges to zero quickly.

In order to achieve a certain LQR cost (denoted by $l_k$), the data throughput of channel $k$ in one cycle must satisfy the following constraint~\cite{LQR}
\begin{align}
BT_k\log_2\left( 1\!\!+\!\!\frac{g_kp_k}{\sigma^2}\right) \geq h_k \!\!+\!\! \frac{n}{2} &\log_2 \!\!\left(\! 1\!\!+\!\!\frac{ nN \!\left( \mathbf{v}_k\right)|\det \mathbf{M}_{k}|^\frac{1}{n} }{l_k-\text{tr}\left( \mathbf{\Sigma}_{k}\mathbf{S}_k\right)} \right),\nonumber\\
&\quad\quad\quad\ \  k = 1, 2, \cdots, K,\label{Rl}
\end{align}
where $N\!\left( \mathbf{v}_k\right) \triangleq \frac{1}{2\pi}\exp \left( \frac{2}{n} h\!\left(  \mathbf{v}_k \right)  \right) $ is the entropy power of the noise $\left( \mathbf{v}_k\right) $, $h\!\left(  \mathbf{v}_k \right)$ is its differential entropy, $\mathbf{\Sigma}_{k}$ is its covariance matrix, and $\mathbf{M}_{k}$ and $\mathbf{S}_k$ are the solutions to the following Riccati equations

\begin{subequations}\label{Riccati}
	\begin{align}
	&\mathbf{S}_k  = \mathbf{Q}_k + \mathbf{A}_k^\text{T}\left(\mathbf{S}_k- \mathbf{M}_k\right) \mathbf{A}_k,\\
	&\mathbf{M}_k  = \mathbf{S}_k \mathbf{B}_k \left( \mathbf{R}_k + \mathbf{B}^\text{T}_k \mathbf{S}_k \mathbf{B}_k\right) ^{-1} \mathbf{B}_k^\text{T} \mathbf{S}_k.
	\end{align}
\end{subequations}

In this work, we aim to minimize the sum long-term average LQR cost of the $\textbf{SC}^3$ loops by optimizing the power allocation $\mathbf{p}=\left[ p_1, p_2, \cdots, p_K \right] $, while keeping the communication constraint satisfied. The optimization problem is formulated as
\begin{subequations}\label{P1}
	\begin{align}
	\min_{\mathbf{p},\mathbf{l}} \quad&\sum_{k = 1}^{K} l_k  \label{P1a} \\ 
	\mbox{\textit{s.t.}}\quad & \sum_{k = 1}^{K} p_k \leq P_{\text{max}},\label{P1b}   
	\\
	&\text{(\ref{Rl})},  \label{P1d}
	\end{align}
\end{subequations}
where $\mathbf{l}=\left[ l_1, l_2, \cdots, l_K \right] $ and (\ref{P1d}) is the communication constraint imposed by the control performance requirements. In the next section, we will transform this problem to a convex problem and analyze the property of its optimal solution.

\section{Problem Transformation and Property Analysis}
We first propose a lemma to show the convexity of problem (\ref{P1}), accordingly, further reveal the relationship between the optimal power allocation and other parameters.

\begin{lemma}
Problem (\ref{P1}) is equivalent to the following convex problem
\begin{subequations}\label{P1'}
	\begin{align}
	\min_{\mathbf{p}} \quad&\sum_{k = 1}^{K} l_k \left( p_k \right) \label{P1a'} \\ 
	\mbox{s.t.}\quad & \sum_{k = 1}^{K} p_k \leq P_{\text{max}},  \label{P1b'}\\
	&BT_k\log_2\left( 1+\frac{g_kp_k}{\sigma^2}\right) > h_k, k = 1, 2, \cdots, K, \label{P1c'}
	\end{align}
\end{subequations}
where
\begin{align}\label{LQR_R}
l_k\left( p_k \right) \triangleq \frac{n 2^{\frac{2}{n}h_k} N \!\left( \mathbf{v}_k\right)|\det \mathbf{M}_{k}|^\frac{1}{n}} {{\left( 1+\frac{g_kp_k}{\sigma^2}\right)}^{\frac{2B T_k}{n} }-2^{\frac{2}{n}h_k}}+\text{tr}\left( \mathbf{\Sigma}_{k}\mathbf{S}_k\right).
\end{align}
\end{lemma}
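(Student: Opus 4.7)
The plan is to split the claim into two parts: first, eliminate the auxiliary variables $\mathbf{l}$ from (\ref{P1}) to recover the objective $\sum_k l_k(p_k)$, and second, verify convexity of the reformulated problem (\ref{P1'}).

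For the elimination, I would observe that in (\ref{P1}) the variable $l_k$ enters only through the objective, which is strictly increasing in $l_k$, and through the constraint (\ref{Rl}), whose right-hand side is strictly decreasing in $l_k$ on the feasible region $l_k>\text{tr}(\mathbf{\Sigma}_k\mathbf{S}_k)$. Hence at any optimum (\ref{Rl}) must hold with equality for each $k$; otherwise one could reduce $l_k$ while leaving $\mathbf{p}$ fixed and strictly improve the objective. Solving (\ref{Rl}) at equality for $l_k$ in terms of $p_k$ is purely algebraic — multiply through by $2/n$, exponentiate base $2$, isolate the fraction, and invert — producing exactly (\ref{LQR_R}). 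The requirement that the denominator $(1+g_kp_k/\sigma^2)^{2BT_k/n}-2^{2h_k/n}$ be positive so that $l_k(p_k)$ is finite and exceeds $\text{tr}(\mathbf{\Sigma}_k\mathbf{S}_k)$ is, after taking $\log_2$, precisely the stability condition (\ref{P1c'}); this condition is already implicit in (\ref{Rl}) since otherwise no finite $l_k$ is achievable.

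For convexity, the power budget (\ref{P1b'}) is linear, and (\ref{P1c'}) rearranges to the linear half-space $p_k\geq \frac{\sigma^2}{g_k}\bigl(2^{h_k/(BT_k)}-1\bigr)$, so only the summands $l_k(p_k)$ need attention. Writing $l_k(p_k) = C_k/\bigl((1+c_kp_k)^{a_k}-b_k\bigr)+\text{const}$ with $a_k=2BT_k/n$, $b_k=2^{2h_k/n}$, $c_k=g_k/\sigma^2$, and $C_k>0$, I would differentiate twice. Setting $g(p)=(1+cp)^a-b$, a direct calculation gives
\begin{equation}
(g^{-1})''(p) \;=\; \frac{a c^2 (1+cp)^{a-2}\bigl[(a+1)(1+cp)^a+(a-1)b\bigr]}{g(p)^3},
\end{equation}
so strict convexity reduces to verifying that the bracket is positive on the feasible set.

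The main obstacle is exactly that sign check when $a_k<1$, because then $(a_k-1)b_k<0$ and positivity is not obvious. The key observation is that feasibility already forces $(1+c_kp_k)^{a_k}>b_k$, which bounds the first term below by $(a_k+1)b_k$; combining this with $(a_k-1)b_k$ yields $2a_kb_k>0$. With that inequality in hand the bracket — and hence the second derivative — is strictly positive throughout the feasible region, so each $l_k(p_k)$ is strictly convex, and therefore (\ref{P1'}) is a convex program equivalent to (\ref{P1}).
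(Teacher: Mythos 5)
Your proposal is correct and follows essentially the same route as the paper: enforce equality in (\ref{Rl}) at the optimum to eliminate $\mathbf{l}$ and obtain (\ref{LQR_R}), then establish convexity of each $l_k(p_k)$ via the second derivative over the region where the denominator is positive, i.e., where (\ref{P1c'}) holds. Your explicit sign check of the bracket $(a_k+1)(1+c_kp_k)^{a_k}+(a_k-1)b_k$ in the case $a_k<1$, using $(1+c_kp_k)^{a_k}>b_k$ to conclude the bracket exceeds $2a_kb_k>0$, is in fact slightly more careful than the paper's corresponding assertion about the sign of (\ref{derivative}).
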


\begin{proof}
As the right side of (\ref{Rl}) is monotonically decreasing with $l_k$, the equality must hold in order to minimize $l_k$ in the objective function. Otherwise, we can always reduce $l_k$ until that the equality is achieved. With (\ref{Rl}) replaced by equality, we obtain the equation between $l_k$ and $p_k$ as (\ref{LQR_R}). Correspondingly, problem (\ref{P1}) is recast as problem (\ref{P1'}), where constraint (\ref{P1c'}) ensures a positive denominator in (\ref{LQR_R}).

\begin{figure*}[!t]
	\begin{align}\label{derivative}
	%\frac{\text{d}^2b_k}{\text{d}{p_k}^2} = 
	\frac{\partial^2l_k}{\partial{p_k}^2} = \frac{2BT_k 2^{\frac{2}{n}h_k} N \!\left( \mathbf{v}_k\right)|\det \mathbf{M}_{k}|^\frac{1}{n}\left( 1 \!+\! \frac{g_kp_k}{\sigma^2}\right) ^{\frac{2BT_k}{n}}\left[ \frac{2BT_k}{n}2^{\frac{2}{n}h_k}\! +\! \frac{2BT_k}{n}\left( 1 \!+\! \frac{g_kp_k}{\sigma^2}\right) ^{\frac{2BT_k}{n}} \!+\! \left( 1 \!+\! \frac{g_kp_k}{\sigma^2}\right) ^{\frac{2BT_k}{n}} \!-\! 2^{\frac{2}{n}h_k}  \right] }{\left( p_k+\frac{g_k}{\sigma^2}\right) ^2\left[ \left( 1+\frac{g_kp_k}{\sigma^2}\right) ^{\frac{2BT_k}{n}}-2^{\frac{2}{n}h_k}\right] ^3}.
	\end{align}
	\hrulefill
\end{figure*}

Next, we prove that problem (\ref{P1'}) is convex. The second order derivative of $l_k\left( p_k \right) $ with respect to $p_k$ is shown as (\ref{derivative}). From (\ref{derivative}), we can find that $\frac{\partial^2l_k}{\partial{p_k}^2}>0$ as long as  $\left( 1+\frac{g_kp_k}{\sigma^2}\right) ^{\frac{2BT}{n}}>2^{\frac{2}{n}h_k}$, which is equivalent to (\ref{P1c'}). Therefore, the objective function is convex in its feasible region. In addition, it is easy to show that the feasible region is affine, which guarantees the convexity of (\ref{P1'}).
\end{proof}

\textbf{Lemma 1} shows that problem (\ref{P1}) can be transformed to a convex problem, whose optimal solution can be obtained efficiently. In the following, we first consider the special case that the LQR weight matrices are set as $\mathbf{Q}_{k} = \mathbf{I}_{n}$ and $\mathbf{R}_{k} = \mathbf{0}$. With this setting, the LQR cost solely describes the system deviation (from zero state) while the energy cost is not the focus. For this case, the relation between optimal power allocation and system stability is discussed in \textbf{Proposition 1}.

\begin{proposition}
	When the control energy cost is not concerned, the optimal power $p_k^*$ allocated to channel $k$ is monotonically increasing with respect to the intrinsic entropy rate $h_k$, i.e., the more unstable the control system $k$, the more power should be allocated to it.
\end{proposition}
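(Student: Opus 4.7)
The plan is to extract the KKT conditions from the convex reformulation (\ref{P1'}) given by Lemma 1 and then exploit two opposing monotonicities of the stationarity expression to pin down the direction in which $p_k^*$ moves with $h_k$. First I would form the Lagrangian with multiplier $\lambda \ge 0$ for (\ref{P1b'}) and $\mu_k \ge 0$ for (\ref{P1c'}). Assuming an interior optimum (with (\ref{P1c'}) inactive and $p_k^* > 0$), stationarity reads $\partial l_k/\partial p_k(p_k^*) = -\lambda^*$ with a \emph{single} multiplier $\lambda^* > 0$ common to every $k$, since the total-power budget must be tight.

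Next I would specialize the cost (\ref{LQR_R}) to $\mathbf{Q}_k = \mathbf{I}_n$ and $\mathbf{R}_k = \mathbf{0}$. Under these weights and the mild regularity assumption that $\mathbf{B}_k$ is invertible, the second of the Riccati equations (\ref{Riccati}) forces $\mathbf{M}_k = \mathbf{S}_k$, whereupon the first collapses to $\mathbf{S}_k = \mathbf{I}_n$. Hence $|\det \mathbf{M}_k|^{1/n} = 1$ and $\text{tr}(\mathbf{\Sigma}_k \mathbf{S}_k) = \text{tr}(\mathbf{\Sigma}_k)$ no longer depend on $h_k$. Writing $\alpha_k \triangleq 2^{2h_k/n}$, $V_k(p_k) \triangleq (1 + g_k p_k/\sigma^2)^{2BT_k/n}$, and $C_k \triangleq n N(\mathbf{v}_k)$, the cost collapses to $l_k(p_k) = \alpha_k C_k/(V_k(p_k) - \alpha_k) + \text{tr}(\mathbf{\Sigma}_k)$, which isolates all $h_k$-dependence inside the single factor $\alpha_k$.

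Introducing $\Psi(p_k, h_k) \triangleq -\partial l_k/\partial p_k = \alpha_k C_k V_k'(p_k)/(V_k(p_k) - \alpha_k)^2$, the KKT condition reads $\Psi(p_k^*, h_k) = \lambda^*$ for every channel. Lemma 1 already furnishes $\partial \Psi/\partial p_k < 0$ throughout the feasible set, while a brief computation at fixed $p_k$ gives $\partial \Psi/\partial \alpha_k = C_k V_k'(p_k)(V_k + \alpha_k)/(V_k - \alpha_k)^3 > 0$ whenever $V_k > \alpha_k$, which is exactly the feasibility margin (\ref{P1c'}); since $\alpha_k$ is strictly increasing in $h_k$, so is $\Psi$. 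The conclusion then follows by contradiction: for two channels sharing every parameter except $h_i > h_j$, the common KKT value $\lambda^*$ forces $\Psi(p_i^*, h_i) = \Psi(p_j^*, h_j)$, while $p_i^* \le p_j^*$ would entail $\Psi(p_i^*, h_i) \ge \Psi(p_j^*, h_i) > \Psi(p_j^*, h_j)$, a contradiction. Therefore $p_i^* > p_j^*$.

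The step I expect to be most delicate is the Riccati simplification, because the entire downstream argument depends on the clean decoupling of $h_k$ from $|\det \mathbf{M}_k|$ and $\text{tr}(\mathbf{\Sigma}_k \mathbf{S}_k)$; once that decoupling is secured, the monotonicity-based comparison is essentially single-variable calculus and the sign of $\partial \Psi/\partial \alpha_k$ does the rest.
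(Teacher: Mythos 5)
Your proposal is correct and follows essentially the same route as the paper: reduce the Riccati solution to $\mathbf{S}_k=\mathbf{M}_k=\mathbf{I}_n$, invoke strong duality and the KKT stationarity condition with a single multiplier $\lambda$ common to all channels, and then combine the two monotonicities of the marginal-cost expression (decreasing in $p_k$, increasing in $h_k$) to conclude. The only additions beyond the paper's argument are cosmetic: you make explicit the invertibility assumption on $\mathbf{B}_k$ behind the Riccati collapse (the paper simply cites the reference) and you write out the sign computation of $\partial\Psi/\partial\alpha_k$ that the paper asserts without detail.
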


\begin{proof}
When $\mathbf{Q}_{k} = \mathbf{I}_{n}$ and $\mathbf{R}_{k} = \mathbf{0}$, the solutions of the Riccati equations in (\ref{Riccati}) are $\mathbf{S}_k = \mathbf{M}_k = \mathbf{I}_n$~\cite{LQR}. Therefore, we have $|\det \mathbf{M}_k|^{\left( \frac{1}{n}\right)}  = 1$ in (\ref{LQR_R}).

It is not difficult to verify that the Slater's conditions hold for problem (\ref{P1'}), which guarantees strong duality~\cite{cvx}. Therefore, (\ref{P1'}) is equivalent to its Lagrangian dual problem
\begin{subequations}\label{P2}
	\begin{align}
	\max_{\lambda} \min_{\mathbf{p}} \quad&\sum_{k = 1}^{K} l_k \left( p_k \right) +\lambda \left( \sum_{k = 1}^{K} p_k-P_{\text{max}} \right) \label{P2a} \\ 
	\mbox{\textit{s.t.}}\quad & \lambda\geq 0  \label{P2b},\\
	&\left( \text{\ref{P1c'}} \right) , \label{P2c}
	\end{align}
\end{subequations}
where $\lambda$ is the Lagrangian multiplier with respect to constraint (\ref{P1b'}). By checking the Karush-Kuhn-Tucker (KKT) conditions, the optimal solution to problem (\ref{P1}), denoted as $\left\{p^*_k\right\}$, must satisfy the following equations
\begin{subequations}\label{P3}
	\begin{align}
	&\frac{\partial l_k}{\partial{p^*_k}}+\lambda = 0, k = 1,2,\cdots, K, \label{P3a} \\ 
	& \lambda\geq 0,  \label{P3b}\\
	&BT_k\log_2\left( 1+\frac{g_k p^*_k}{\sigma^2}\right) > h_k, k = 1, 2, \cdots, K, \label{P3c}\\
	&\sum_{k = 1}^{K} p^*_k = P_{\text{max}}, \label{P3d}
	\end{align}
\end{subequations}
where (\ref{P3d}) follows from the monotonicity of $l_k$ with respect to $p_k$. 

%where $a = \frac{2BT}{n}$, $b_k = 2^{\frac{2}{n}h_k}$ and $c_k = n 2^{\frac{2}{n}h_k} N \!\left( \mathbf{v}\right)|\det \mathbf{M}_{k}|^\frac{1}{n}$. 
Calculating $\frac{\partial l_k}{\partial{p^*_k}}$ in (\ref{P3a}), we obtain the following equation
\begin{equation}\label{P3a1}
\frac{2BT_k 2^{\frac{2}{n}h_k} N \!\left( \mathbf{v}_k\right) g_k \left( 1 + \frac{g_k p^*_k}{\sigma^2}\right) ^{\frac{2BT_k}{n} - 1}}{\sigma^2\left[  \left( 1 + \frac{g_k p^*_k}{\sigma^2}\right)^{\frac{2BT_k}{n}}-2^{\frac{2}{n}h_k}\right] ^2} = \lambda.
\end{equation}

Denoting the left side of (\ref{P3a1}) as $s_k\left( h_k, p_k\right)$ , we have
\begin{equation}\label{P3a2}
s_k \left(h_k, p_k\right)  = s_j \left( h_j, p_j\right),\quad \forall j,k = 1, 2, \cdots, K.
\end{equation}
	
Similar to (\ref{derivative}), we have $\frac{\partial s_k}{\partial{p_k}}<0$, which means that $s_k$ is monotonically decreasing with $p_k$. In addition, $s_k$ is monotonically increasing with $h_k$. Therefore, when the other parameters are given and fixed, if $h_k$ increases, $p_k$ should also increase to guarantee that the condition (\ref{P3a2}) still holds.
\end{proof}

\begin{remark}
Similarly, we can prove that the optimal power allocated to $\textbf{SC}^3$ loop $k$ is monotonically increasing with the entropy power of $\mathbf{v}_k$. Therefore, we can draw a conclusion that one should allocate more power to the unstable control systems with larger noise to improve the overall control performance.
\end{remark}

Next, we derive closed-form expression of the optimal power allocation without restricting the forms of $\mathbf{Q}_k$ and $\mathbf{R}_k$. For simplicity, we consider the assure-to-be-stable assumption, that the communication capability of each control system is significantly greater than the lowest capability requirement to keep the system stable in (\ref{min_R}), i.e., $BT_k\log_2\left( 1+\frac{g_k p_k}{\sigma^2}\right) \gg h_k$. This assumption means that the control systems are far from the unstable point, and hence we can focus on the control performance instead of the stability.

\begin{proposition}
	Under the assured-to-be-stable assumption, if all of the $\textbf{SC}^3$ loops have the same cycle time, i.e., $T_1 = T_2 = \cdots = T_K = T$, the optimal solution to problem (\ref{P1}) is obtained as (\ref{pk*1}).
	\begin{figure*}[!t]
		\begin{equation}\label{pk*1}
		p^*_k = \frac{\left(P_{\text{max}}+\sum_{k = 1}^{K}\frac{\sigma^2}{g_k} \right)  {\left[|\det \mathbf{M}_{k}|^\frac{1}{n}  2^{\frac{2}{n}h_k}  N \!\left( \mathbf{v}_k\right)\right] } ^ {\frac{n}{{2BT}+{n}}} \left( \frac{\sigma^2}{g_k}\right) ^{\frac{2BT}{{2BT}+{n}}}}{\sum_{k = 1}^K  {\left[|\det \mathbf{M}_{k}|^\frac{1}{n}  2^{\frac{2}{n}h_k}  N \!\left( \mathbf{v}_k\right)\right] } ^ {\frac{n}{{2BT}+{n}}} \left( \frac{\sigma^2}{g_k}\right) ^{\frac{2BT}{2BT+{n}}}}-\frac{\sigma^2}{g_k}.
		\end{equation}
		%\hrulefill
	\end{figure*}
\end{proposition}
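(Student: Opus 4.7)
The plan is to exploit the assure-to-be-stable assumption $\left(1+\frac{g_kp_k}{\sigma^2}\right)^{2BT_k/n}\gg 2^{2h_k/n}$ to drop the subtractive $2^{\frac{2}{n}h_k}$ term in the denominator of $l_k(p_k)$ in (\ref{LQR_R}), which reduces the objective to a sum of pure power-law terms that admit a closed-form minimization. Concretely, writing $\alpha_k\triangleq |\det\mathbf{M}_k|^{1/n}\,2^{\frac{2}{n}h_k}\,N(\mathbf{v}_k)$ and $c_k\triangleq\mathrm{tr}(\mathbf{\Sigma}_k\mathbf{S}_k)$, the approximated per-loop cost becomes $l_k(p_k)\approx n\alpha_k\bigl(1+\frac{g_kp_k}{\sigma^2}\bigr)^{-2BT/n}+c_k$, which under the common cycle time $T$ shares a single exponent across loops.

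Next I would perform the change of variables $q_k\triangleq p_k+\frac{\sigma^2}{g_k}$, so that $1+\frac{g_kp_k}{\sigma^2}=\frac{g_k}{\sigma^2}q_k$, and the approximate problem becomes
\begin{equation*}
\min_{\{q_k\}}\ \sum_{k=1}^{K} n\alpha_k\!\left(\frac{\sigma^2}{g_k}\right)^{\!\!2BT/n}\!\! q_k^{-2BT/n}\quad\text{s.t.}\ \sum_{k=1}^{K}q_k=P_{\text{max}}+\sum_{k=1}^{K}\frac{\sigma^2}{g_k},
\end{equation*}
which is convex by Lemma~1 and whose sum constraint is active at the optimum since each summand is strictly decreasing in $q_k$. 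Noting that the feasibility constraint (\ref{P1c'}) is automatically handled by the assured-to-be-stable assumption, I would introduce the single multiplier $\lambda$ associated with the power budget and write the stationarity condition $-2BT\alpha_k(\sigma^2/g_k)^{2BT/n}q_k^{-2BT/n-1}+\lambda=0$.

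Solving for $q_k$ yields $q_k=\bigl(\tfrac{2BT\alpha_k}{\lambda}\bigr)^{\!n/(2BT+n)}(\sigma^2/g_k)^{2BT/(2BT+n)}$, so each $q_k$ is proportional to $\alpha_k^{n/(2BT+n)}(\sigma^2/g_k)^{2BT/(2BT+n)}$ with a common proportionality constant determined by $\lambda$. I would then eliminate $\lambda$ by summing both sides and equating to $P_{\text{max}}+\sum_k\sigma^2/g_k$, obtaining the proportionality constant as the ratio displayed in (\ref{pk*1}); finally converting back via $p_k^*=q_k-\sigma^2/g_k$ recovers exactly the claimed expression.

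The main obstacle is really only bookkeeping of the exponents $\tfrac{n}{2BT+n}$ and $\tfrac{2BT}{2BT+n}$ and verifying that the assure-to-be-stable assumption justifies the truncation uniformly in $k$; the convexity of the truncated problem is inherited from Lemma~1, so no extra second-order analysis is required, and the KKT conditions directly yield the closed form without further approximation.
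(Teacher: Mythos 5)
Your proposal is correct and follows essentially the same route as the paper: both apply the assure-to-be-stable approximation to drop the $2^{\frac{2}{n}h_k}$ term, impose the KKT stationarity condition with the single multiplier $\lambda$ for the active power budget, solve for $p_k^*$ (your substitution $q_k = p_k + \sigma^2/g_k$ is cosmetic), and eliminate $\lambda$ by summing to $P_{\text{max}}$. The only superficial difference is that you truncate the objective before differentiating while the paper truncates the already-derived derivative in (\ref{P3a1}); both yield the identical stationarity equation (\ref{P3a3}).
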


\begin{proof}
	When $BT_k\log_2\left( 1+\frac{g_k p_k}{\sigma^2}\right) \gg h_k$, we have $\left( 1 + \frac{g_k p^*_k}{\sigma^2}\right)^{\frac{2BT}{n}} \gg 2^{\frac{2}{n}h_k}$, which means that the term $2^{\frac{2}{n}h_k}$ in the denominator of the left hand in (\ref{P3a1}) is negligible. Therefore, we can rewrite (\ref{P3a1}) as
	\begin{equation}\label{P3a3}
	\frac{2BT |\det \mathbf{M}_{k}|^\frac{1}{n}  2^{\frac{2}{n}h_k}  N \!\left( \mathbf{v}_K\right) g_k }{\sigma^2 \left( 1 + \frac{g_k p^*_k}{\sigma^2}\right)^{\frac{2BT}{n}+1}} = \lambda.
	\end{equation}
	
	From (\ref{P3a3}), we have\footnote{The assumption $BT\log_2\left( 1+\frac{g_k p_k}{\sigma^2}\right) \gg h_k$ guarantees that the power is greater than zero, so we don't need the operator $\left( \cdot\right) ^+$ in (\ref{pk*}).}
	\begin{equation}\label{pk*}
	p^*_k = \left[ \left( \frac{2BT |\det \mathbf{M}_{k}|^\frac{1}{n}  2^{\frac{2}{n}h_k}  N \!\left( \mathbf{v}_K\right) g_k }{\lambda \sigma^2} \right) ^{\frac{n}{{2BT}+{n} }}-1\right] \frac{\sigma^2}{g_k}.
	\end{equation}
	Based on (\ref{pk*}) and (\ref{P3d}), we obtain that the Lagrangian multiplier $\lambda$ satisfies the equation (\ref{lambda}).
	\begin{figure*}[!t]
		\begin{equation}\label{lambda}
		\left( \frac{1}{\lambda}\right)  ^{\frac{n}{{2BT}+{n} }} =  \frac{P_{\text{max}}+\sum_{k = 1}^{K}\frac{\sigma^2}{g_k}}{\sum_{k = 1}^K \left( 2BT |\det \mathbf{M}_{k}|^\frac{1}{n}  2^{\frac{2}{n}h_k}  N \!\left( \mathbf{v}_K\right) \right) ^ {\frac{n}{2BT+n}} \left( \frac{\sigma^2}{g_k}\right) ^{\frac{2BT}{2BT+{n}}}}.
		\end{equation}
		\hrulefill
	\end{figure*}
	Substituting (\ref{lambda}) into (\ref{pk*}), we can obtain (\ref{pk*1}) immediately, which completes the proof.
\end{proof}

\begin{remark}
	From (\ref{pk*1}), we see that $p_k^*$ is increasing with $h_k$ and $N\left( \mathbf{v}_k\right) $, which verifies the conclusion of \textbf{Proposition 1}. In addition, we can see the optimal power allocation $p^*_k$ is exponentially increasing with the intrinsic entropy rate $h_k$, and is polynomially decreasing with $g_k$, which means that the intrinsic entropy parameter may be more important than the channel gain parameter and should be paid more attention to in the power allocation.
\end{remark}

\section{Simulation Results}
In this section, we provide simulation results to  demonstrate the performance of the control-oriented power allocation method, and verify our previously derived conclusions.

We assume that there are $5$ objects to be controlled, all randomly and evenly distributed in a circular area with a radius of $5000$ m. The UAV is deployed at the center of the circle with the height of $1000$m. The bandwidth of each sub-channel is $5$kHz unless otherwise specified. Other parameters are set as $\beta_0 = -60$dB and $\sigma^2  =-110$dBm~\cite{Hua2018}.

For the control part, unless otherwise specified, the intrinsic entropy rates of each system are randomly selected from the range $\left[ 0, 100 \right]$, the system noise is assumed to be independent Gaussian random variables with mean zero and variance 0.01, and the other parameters are set as $T_1 = T_2 = \cdots  =T_K = 10$ms and $n = 100$. The LQR weight matrices are $\mathbf{Q}_k =  \mathbf{I}_n, \mathbf{R}_k = \mathbf{0}, \quad \forall k = 1,2,\cdots, K$. 

We compare our scheme with the traditional water-filling
power allocation, which is proved to be optimal for maximizing the sum rate capacity. The optimal transmit power with water-filling allocation can be calculated as ~\cite[Chapter 9.4]{Elements}
\begin{equation}\label{water}
p_k^\text{WF} = \left( \frac{1}{\lambda_0} - \frac{\sigma^2}{g_k} \right) ^+,
\end{equation}
where $p_k^\text{WF}$ denotes the power allocated to channel $k$, $\lambda_0$ is chosen to satisfy $\sum_{k=1}^{K}p_k^\text{WF} = P_\text{max}$, and $\left( x\right) ^+ = \max \left\{x,0\right\}$. 

\begin{figure} [t]
	\centering
	\includegraphics[width=\linewidth]{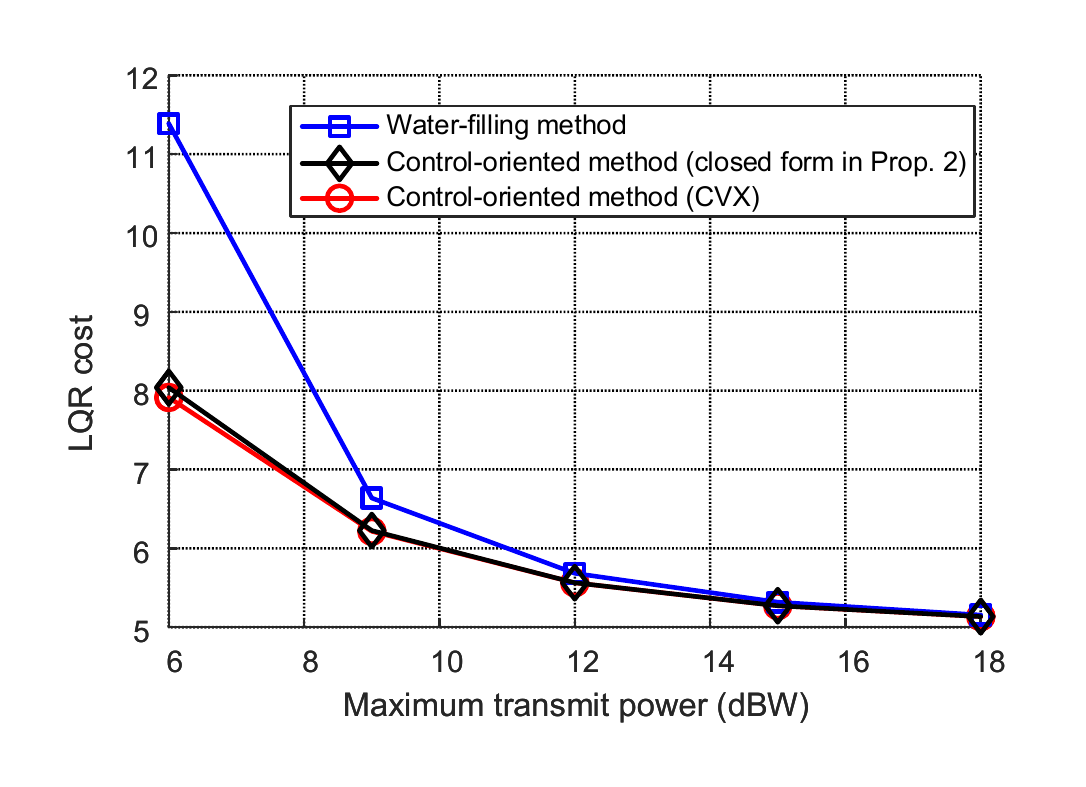}
	\caption{LQR cost achieved by different power allocation methods.}
	\label{simu1}
\end{figure}

In Fig. \ref{simu1}, we compare the LQR costs achieved by the water-filling method, the closed-form power allocation in \textbf{Proposition 2} and the optimal results to (\ref{P1}) obtained with CVX. The figure shows the accuracy of the approximate expression in (\ref{pk*1}), as the optimal solution obtained by CVX and the closed-form in \textbf{Proposition 2} achieve nearly the same LQR costs. When the maximum power is $6$dBW, the LQR cost of the power allocation in \textbf{Proposition 2} is slightly higher than that of the optimal solution, because the assumption $BT\log_2\left( 1+\frac{g_k p_k}{\sigma^2}\right) \gg h_k$ is not satisfied. From this figure, we can see that the LQR cost decreases with the maximum power. This is because the robot can receive more accurate control commands with more transmit power. In addition, the LQR cost with both control-oriented methods is lower than that with conventional water-filling, which verifies the superiority of the proposed method. Notably, when the maximum power becomes large enough, the LQR costs obtained by the three methods tend to be close. This is because the communication ability significantly exceeds control requirements and the LQR cost is near to its ideal value, i.e., $\text{tr}\left( \mathbf{\Sigma}_{k}\mathbf{S}_k\right)$ as in (\ref{LQR_R}).

\begin{figure} [t]
	\centering
	\includegraphics[width=0.96\linewidth]{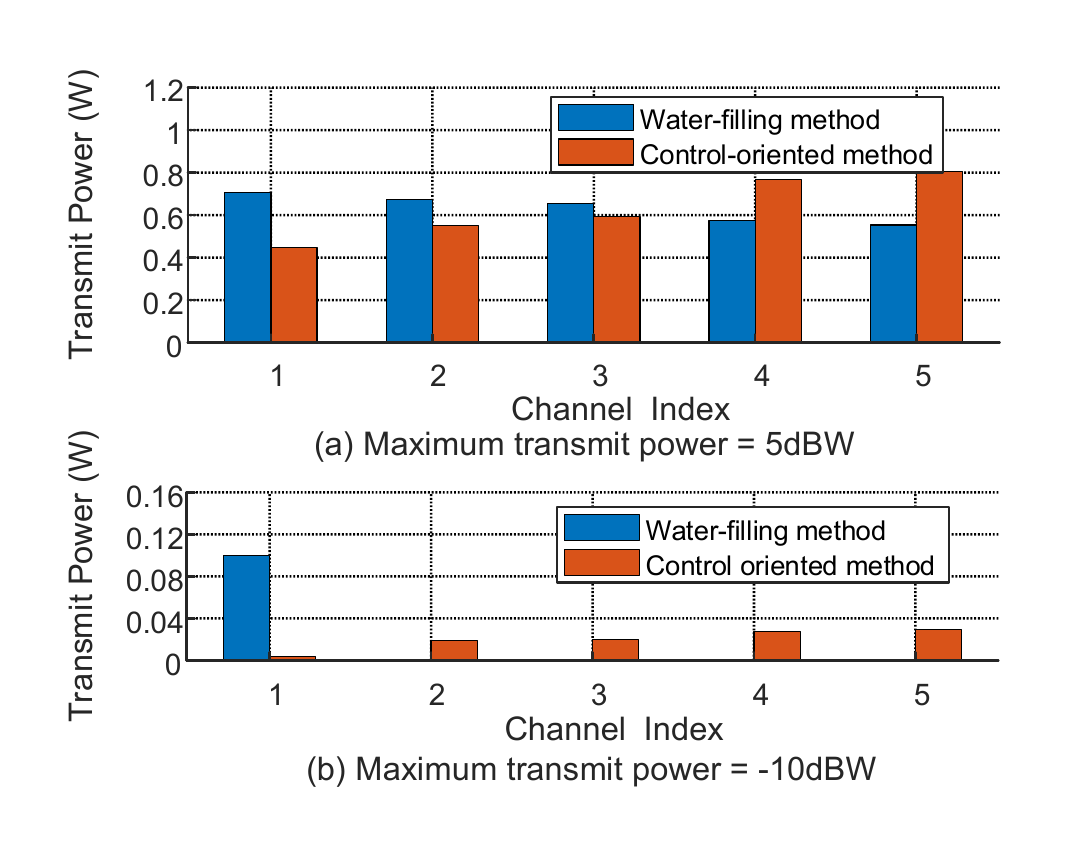}
	\caption{Power allocated to channels with different channel gains by different methods under different maximum transmit power constraints.}
	\label{simu2_3}
\end{figure}

Fig. \ref{simu2_3}a compares the power allocated to each channel, obtained with the water-filling and control-oriented methods, respectively, where the maximum power is set as $5$dBW. To highlight the influence of the channel conditions, the intrinsic entropy rates of each system are set to be $5$. The channels are sorted such that $g_1\geq g_2 \geq \cdots \geq g_5$. In Fig. \ref{simu2_3}a, difference between these two methods is clearly shown. The control-oriented allocation method tends to allocate more power to the channels with bad channel conditions, while the water-filling method behaves oppositely. This conclusion can be derived by comparing the expressions in (\ref{pk*1}) and (\ref{water}). To show the difference more clearly, we further compare the power allocation results for an extreme condition with a much lower maximum power in Fig. \ref{simu2_3}b, where $P_\text{max} = -10$dBW. It is seen that the water-filling method in this case allocates all power to the channel with the best condition, while the control-oriented allocation method still allocates power to every channel to ensure the control system stability under constraint (\ref{P1c'}).

 \section{Conclusions}
In this letter, we investigated a $\textbf{SC}^3$ integrated satellite-UAV network. A control-oriented power allocation problem was formulated. We transformed it into a convex problem and proved that more power should be allocated to the loop with higher intrinsic entropy rate so as to improve the control performance. We further derived the closed-form expression of the optimal power in the assure-to-be-stable case. Simulation results showed the big difference between the control-oriented method and the conventional communication-oriented water-filling method. Specifically, the former will allocate more power to the channel with worse conditions while the later behaves oppositely.

%{\appendices
%\section*{Proof of the First Zonklar Equation}
%Appendix one text goes here.
% You can choose not to have a title for an appendix if you want by leaving the argument blank
%\section*{Proof of the Second Zonklar Equation}
%Appendix two text goes here.}

\newpage

%\section{Biography Section}
%If you have an EPS/PDF photo (graphicx package needed), extra braces are needed around the contents of the optional argument to biography to prevent the LaTeX parser from getting confused when it sees the complicated
%$\backslash${\tt{includegraphics}} command within an optional argument. (You can create your own custom macro containing the $\backslash${\tt{includegraphics}} command to make things simpler here.)
 
%\vspace{11pt}

%\bf{If you include a photo:}\vspace{-33pt}
%\begin{IEEEbiography}[{\includegraphics[width=1in,height=1.25in,clip,keepaspectratio]{fig1}}]{Michael Shell}
%Use $\backslash${\tt{begin\{IEEEbiography\}}} and then for the 1st argument use $\backslash${\tt{includegraphics}} to declare and link the author photo.
%Use the author name as the 3rd argument followed by the biography text.
%\end{IEEEbiography}

\vspace{11pt}

%\bf{If you will not include a photo:}\vspace{-33pt}
%\begin{IEEEbiographynophoto}{John Doe}
%Use $\backslash${\tt{begin\{IEEEbiographynophoto\}}} and the author name as the argument followed by the biography text.
%\end{IEEEbiographynophoto}

\vfill

\end{document}